\definecolor{webgreen}{rgb}{0,0.4,0}
\definecolor{webbrown}{rgb}{0.6,0,0}
\definecolor{purple}{rgb}{0.5,0,0.25}
\definecolor{darkblue}{rgb}{0,0,0.7}
\definecolor{darkred}{rgb}{0.7,0,0}
\definecolor{darkgreen}{rgb}{0,0.7,0}
\newcommand{\ignore}[1]{}
\DeclareMathOperator*{\argmax}{argmax}
\newtheorem{lemma}{{\sc Lemma}}
\newtheorem{remark}{{\sc Remark}}
\newtheorem{prop}{{\sc Proposition}}
\newtheorem{theorem}{{\sc Theorem}}
\newtheorem{defn}{{\sc Definition}}
\newtheorem{example}{{ Example}}
\newenvironment{proof}{\noindent {\bf \sl Proof\/}:\enspace}
{\hfill $\blacksquare{}$ \vspace{12pt}}
\title{{\Large {\bf Selling two complementary goods}}~\thanks{We are grateful
to our supervisor Debasis Mishra for detailed comments and suggestions. We thank Albin Erlanson, Hemant Mishra, Arunava Sen, Salil Sharma, Swati Sharma for helpful comments. }}
\author{{\small
Komal Malik and Kolagani Paramahamsa}~\thanks{Komal Malik, Indian Statistical Institute, New Delhi, Email: \texttt{komal.malik18@gmail.com};
Kolagani Paramahamsa, Indian Statistical Institute, New Delhi, Email: \texttt{kolagani.paramahamsa@gmail.com}}}
\date{\small{\today}}
\begin{document}
\pagenumbering{roman}
\maketitle

\begin{abstract}
A seller is selling a pair of divisible complementary goods to an agent. The agent consumes the goods only in a specific {\it ratio} and freely disposes of excess in either good. The value of the bundle and the ratio are the agent's private information. In this two-dimensional type space model, we characterize the incentive constraints and show that the optimal (expected revenue-maximizing) mechanism is a {\it ratio-dependent posted price} or a {\it posted price} mechanism for a class of distributions. We also show that the optimal mechanism is a posted price mechanism when the value and the ratio are independently distributed.\\

\noindent {\sc JEL Codes: } D82, D40, D42 \\

\noindent {\sc Keywords}: optimal mechanism, complementary goods, multi-dimensional private information, posted-price mechanism.

\end{abstract}

\newpage

\pagenumbering{arabic}

\section{Introduction}
Complementary goods are often bundled and priced to various buyers' preferences. For instance, hardware and software are often priced together (personal computers with software or graphics card, gaming consoles with games). Products are offered together with different qualities of complementary services to suit buyers' preferences (exercise equipment with subscription services, automobiles with insurance). Other examples include a firm that needs two inputs in a particular ratio to produce a final good (coking coal and iron to produce steel); that is, the firm has a Leontief production function.

In all these cases, the {\it value} from consumption of the bundle and the consumption {\it ratio} is specific to the agent. A monopolistic seller who owns the complementary goods is selling to such an agent; what is the revenue-maximizing optimal pricing in these settings?

We analyze this question as a mechanism design problem. The agent's payoff is determined by a value, the desired ratio, and the quantity of the bundle she consumes. The value is interpreted as the payoff from consuming one unit of the primary good combined with the secondary good in the desired ratio. Both per unit value from the consumption of the bundle of goods and the ratio itself are the agent's private information. While one-dimensional pricing problems admit a posted price mechanism as optimal (\cite{Myerson81,Riley83}), multidimensional problems often lead to stochastic mechanisms that are difficult to implement. Our paper is motivated by finding simple optimal mechanisms in this multidimensional mechanism design setting.

For each report, a mechanism assigns quantities of both the goods, and payment to be made by the agent. Due to the revelation principle, we focus, without loss of generality, on direct mechanisms that are incentive compatible. An agent could potentially misreport both value and ratio dimensions. Dealing with incentive constraints in multidimensional mechanism design problems is difficult \citep{Manelli07, Carroll17}. We present this natural two-dimensional mechanism design model and show that a simple class of {\it non-wasteful} mechanisms are optimal under some conditions on the seller's beliefs over the agent's type. 

We show that a {\sc posted price} mechanism or a {\sc ratio-dependent posted price} mechanism is optimal. The former is a mechanism in which the seller offers one unit of one of the goods and the other good in the desired ratio at some fixed price. Each type gets the same bundle in the latter mechanism as in the former mechanism, but the price depends on the reported ratio. We first show that it is without loss of generality to focus on mechanisms in which allocations to any type are in the desired ratio; that is, the agent, after a truthful report, does not dispose of either of the goods that the mechanism allocates. This result allows us to use Myersonian techniques. We then characterize incentive compatible mechanisms and provide sufficient conditions over the seller's belief on the type-space for simple non-wasteful mechanisms to be optimal. Finally, we fully describe these mechanisms over the parameters of the problem. The price function that describes the optimal mechanisms is derived from the virtual valuation of the type's joint distribution.

\subsection{Related Literature}

\citet{armstrong_1996, rochet_chone_1998} analyze the standard multidimensional model with divisible goods, while \citet{mcafee_mcmillan_1988, manelli_vincent_2006}, among others, analyze the problem of indivisible goods. The optimal mechanism is known to be stochastic (allocation of the objects is randomized) for many distributions (\citet{hart_reny_2015,thanassoulis04}) in the case of indivisible goods. \citet{manelli_vincent_2006, devanur_haghpanah_psomas_2020, deb_bik} are among the papers that find sufficient conditions (on type distribution) under which deterministic mechanisms are optimal. Our model considers divisible complementary goods and finds sufficient conditions under which one of the goods is allocated to the maximum quantity available with the seller. This maximum quantity allocation is interpreted as a deterministic mechanism in the standard indivisible goods model (see \citet{pavlov_2011}).

\citet{devanur_haghpanah_psomas_2020} consider a model in which there are multiple copies of a good for sale. The agent derives a constant marginal `value' up to a `quantity' of the goods and no additional value beyond the desired quantity. The value and quantity are private information of the agent. They find conditions for deterministic mechanisms to be optimal and focus on the computational complexity of the problem. Our paper differs from theirs in that we consider a pair of heterogeneous goods with a privately known ratio of consumption while they consider homogeneous goods with privately known demand. The two optimization exercises are similar after we prove our first result of `non-wastefulness.' While they use the `utility' approach to show that there exists a deterministic optimal mechanism under some conditions,  we use the Myersonian approach to describe the optimal mechanism under a different set of conditions.  Although our conditions are stronger than theirs,  we fully describe the optimal mechanism, including the payments.

\citet{fiat_goldner_karlin_koutsoupias_2016}'s model has two-dimensional private type for a single object. One dimension is for `value,' which is constant up to a `deadline' and drops to zero beyond the deadline. The paper completely characterizes optimal mechanisms. The agent's utility in their model changes sharply beyond the deadline, while in our model (and in \citet{devanur_haghpanah_psomas_2020}'s model), the utility is a continuous function in allocations.

\section{The model}
A monopolistic seller is selling a pair of divisible goods to an agent. The seller has one unit each of the two goods, denoted by \textsc{good}$_1$ and \textsc{good}$_2$, and has no value for them. A consumption bundle for the agent is a tuple $(a_1,a_2,t)$, where $a_1,a_2 \in [0,1]$ are the allocation quantities of \textsc{good}$_1$ and \textsc{good}$_2$, respectively, and $t \in \mathbb{R}$ is the transfer - the amount paid by the agent. 

The utility derived by agent of type $(v,k)$ from an outcome $(a_1,a_2,t)$ is given by,
\begin{displaymath}
U_{(v,k)}(a_1,a_2,t) := v \min \left\{\frac{a_1}{k},a_2\right
\} - t.
\end{displaymath}

The agent treats the goods as perfect complements; that is, any two allocations $(a_1,a_2)$ and $(a'_{1},a'_{2})$ with $\min \{\frac{a_1}{k},a_2\} = \min \{\frac{a'_1}{k},a'_2\}$ are payoff equivalent, where $k \in K \equiv (0,1]$ is the ratio of quantities of \textsc{good}$_1$ and \textsc{good}$_2$ that the agent demands.\footnote{Using such Leontief utility function to model complementary goods is standard in the literature \citep{telser_1979}.} The agent uses a standard quasilinear utility function to compare combinations of such bundles and the payment. This is captured by a per-unit value $v$ drawn from $V \equiv [0,1]$. It is worth noting that, given an allocation and a payment, the utility is a continuous function of the ratio.

In our model, both $v$ and $k$ are private information of the agent, therefore the agent has a ``type" $(v,k) \in V\times K$. \textsc{good}$_2$ is the {\it primary} good, whereas \textsc{good}$_1$ is its {\it complement} which is always consumed lesser in quantity than the former as $k \in (0,1]$.

Consider a buyer in the market for a personal computer who demands $x$ units of CPU and $y$ units of GPU to process an application, and her willingness to pay is $v$ for this bundle. 
The buyer's willingness to pay for a combination of $\frac{x}{3}$ units of CPU and $\frac{y}{2}$ units of GPU is then $\frac{v}{3}$ since the extra allocation in GPU units goes unused (since the specific application requires a CPU to GPU ratio of $x:y$).  Another buyer (a gamer) may have a different value and demand a ratio lower than $x:y$. Differentially priced bundles catering to buyers with different preferences (both consumption ratio and willingness to pay) are commonplace in these markets.  Our paper sheds light on how incentive constraints play a role in setting revenue-maximizing prices.

We assume that the random variables $v,k$ follow a joint distribution function $G$ with strictly positive density function $g$. We use $g_v$ and $g_k$ to denote marginal density functions of $V$ and $K$, respectively. The conditional density of $v$ given $k$ is denoted by $g(v|k)$.

\section{Optimal Mechanism}
        
An allocation function $f: V \times K \rightarrow [0,1]^2$ and a payment function $p: V \times K \rightarrow \mathbb{R}$ define a direct mechanism $(f,p)$. For any allocation function $f$, we use subscript notations $f_1$ and $f_2$ to denote allocations corresponding to \textsc{good}$_1$ and \textsc{good}$_2$, respectively. Standard revelation principle argument implies that we can focus, without loss of generality, on incentive compatible direct mechanisms. 

\begin{defn}
A mechanism $(f,p)$ is {\bf incentive compatible (IC)} if for all $(v,k), (v',k') \in V \times K$,
$$U_{(v,k)}(f(v,k),p(v,k)) \geq U_{(v,k)}(f(v',k'),p(v',k'))$$
\end{defn}

IC condition ensures that the agent has the incentive to report her type - both value and ratio - truthfully. 

\noindent {\bf Notation.} We use $(v,k) \rightarrow (v',k')$ to denote the incentive constraint for the type $(v,k)$ misreporting as type $(v',k')$. 

We also impose a participation constraint. The utility for every type of the agent is at least zero from participating in the mechanism.
 
\begin{defn}
A mechanism $(f,p)$ is {\bf individually rational (IR)} if for all $(v,k) \in V \times K$,                           
$$U_{(v,k)}(f(v,k),p(v,k)) \geq 0.$$
\end{defn}

The expected (ex-ante) revenue of a mechanism $(f,p)$ is given by
$$\Pi(f,p) := \int_{V\times K} p(v,k) dG(v,k).$$

We say that a mechanism $(f,p)$ is {\bf optimal} if
\begin{itemize}
\item $(f,p)$ is IC and IR,
\item and $\Pi(f,p) \geq \Pi(f',p')$ for any other IC and IR mechanism $(f',p')$.
\end{itemize}

\subsection{Simple Mechanisms}

Before our main results, we describe two {\it simple} mechanisms and show that they are IC and IR. 
These mechanisms are easy to describe since the primary good is allocated fully or not at all, and the complement is allocated in the desired ratio. We provide an example to show that the {\it simple} mechanisms are not always optimal. In the subsequent sections,  we show that these mechanisms are optimal under some conditions.

\begin{defn}
A mechanism $(f,p)$ is {\sc posted price} mechanism if there exists a $\rho^* \in [0,1]$ such that
\begin{displaymath}
(f(v,k),p(v,k)) = \left\{ \begin{array}{ll}
(0,0,0) & \textrm{if $v \le \rho^*$}\\
(k,1,\rho^*) & \textrm{otherwise.}
\end{array} \right.
\end{displaymath} 
\end{defn}

In a {\sc posted price} mechanism, there exists a price $\rho^*$ such that all the types whose value is less than $\rho^*$ get no good and pay nothing. A type $(v,k)$ with $v > \rho^*$ gets $k$ units of \textsc{good}$_1$, $1$ unit of \textsc{good}$_2$, and pays $\rho^*$ to the seller.

\begin{defn}
A mechanism $(f,p)$ is {\sc ratio-dependent posted price} mechanism if there exists a function $\psi: K \rightarrow V$ such that
for all $k' > k$,
\begin{align*}
\psi(k) &\leq \psi(k'),\\
\frac{k}{k'}\psi(k') &\leq \psi(k), and\\
(f(v,k),p(v,k)) &= \left\{ \begin{array}{ll}
(0,0,0) & \textrm{if $v \le \psi(k)$}\\
(k,1,\psi(k)) & \textrm{otherwise.}
\end{array} \right.
\end{align*} 
\end{defn}

\begin{figure}[!tbp]
  \centering
  \begin{minipage}[b]{0.35\textwidth}
    \includegraphics[width=\textwidth]{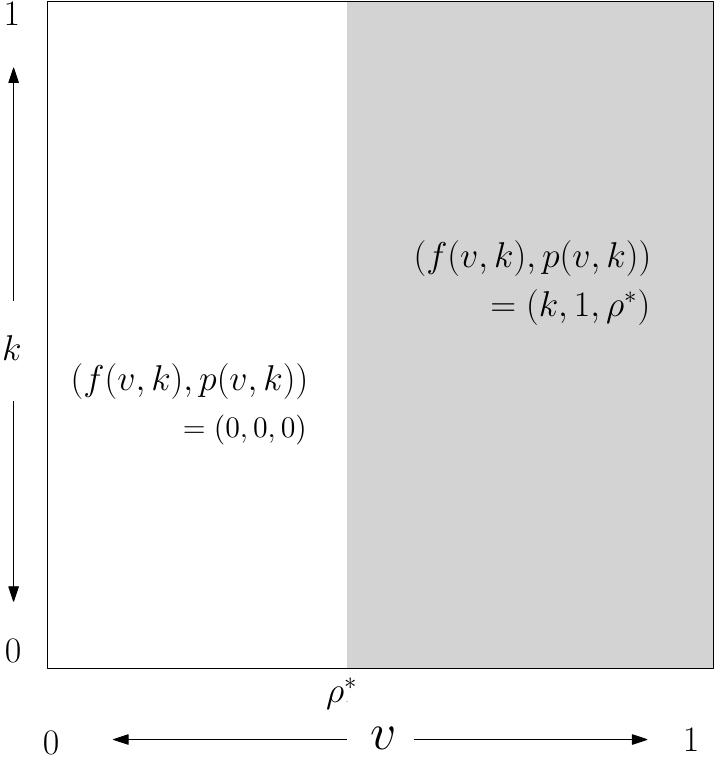}
    \caption{{\sc posted price}}
  \end{minipage}
  \hfill
  \begin{minipage}[b]{0.35\textwidth}
    \includegraphics[width=\textwidth]{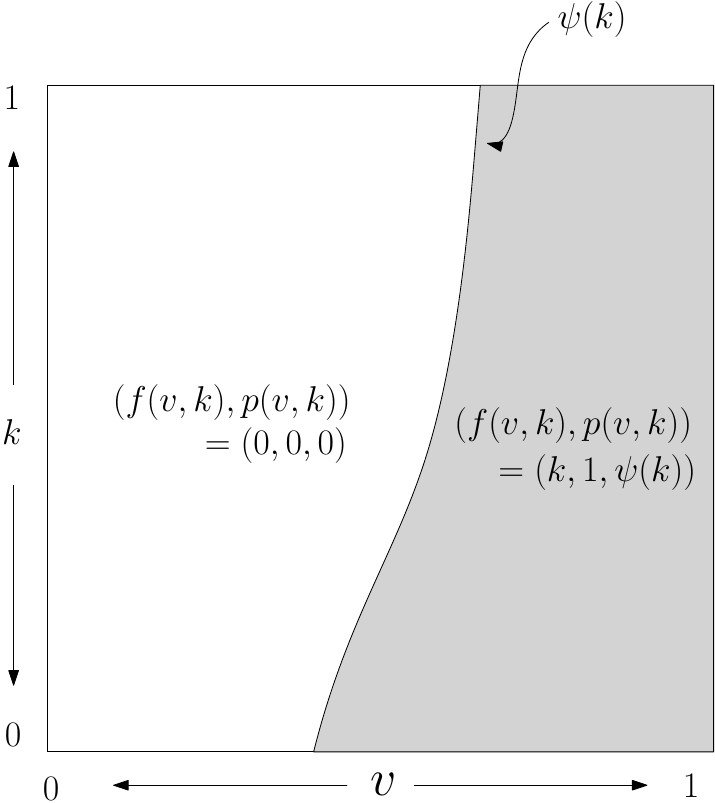}
    \caption{{\sc ratio-dependent posted price}}
  \end{minipage}
\end{figure}

For instance $\psi(k) = (\frac{1}{k+2})^{\frac{1}{k+1}}$ satisfies the conditions that defines a {\sc ratio-dependent posted price} mechanism, and this is not a {\sc posted price} mechanism. Observe that the {\sc posted price} mechanism is a special case of the {\sc ratio-dependent posted price} mechanism by setting $\psi(k) = \rho^*$ for all $k$. Our next proposition shows that the {\sc ratio-dependent posted price} mechanism is IC and IR. Therefore, this also proves that the {\sc posted price} mechanism is IC and IR. While the {\sc posted price} mechanism has a unique price in the menu, the {\sc ratio-dependent posted price} mechanism has a potentially infinite-sized menu.  Omitted proofs are relegated to the Appendix \ref{sec:appa}. 

\begin{prop}\label{prop:rdic}
A {\sc ratio-dependent posted price} mechanism is IC and IR.
\end{prop}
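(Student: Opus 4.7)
The strategy is direct verification of IR and IC by case analysis on where the truthful type sits relative to the threshold $\psi(k)$ and where the misreported ratio $k'$ sits relative to the true ratio $k$. The two structural conditions on $\psi$ -- monotonicity ($\psi(k)\le \psi(k')$ for $k<k'$) and the ``slope'' condition ($\tfrac{k}{k'}\psi(k')\le\psi(k)$ for $k<k'$, equivalently $\psi(k)/k$ non-increasing) -- are exactly what is needed to rule out upward and downward deviations in the ratio dimension, so the proof is essentially an unpacking of these two inequalities.

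First I compute truthful utility. If $v\le\psi(k)$ the agent receives $(0,0,0)$ at zero transfer, giving utility $0$. If $v>\psi(k)$ the agent receives $(k,1)$ and pays $\psi(k)$, so her utility is $v\min\{k/k,1\}-\psi(k)=v-\psi(k)>0$. Both are non-negative, so IR holds. Next, for a deviation to $(v',k')$ the agent either receives nothing (yielding deviation utility $0$) when $v'\le\psi(k')$, or receives $(k',1)$ at price $\psi(k')$ when $v'>\psi(k')$. In the latter case the utility of the true type $(v,k)$ is
\begin{equation*}
v\min\!\left\{\tfrac{k'}{k},1\right\}-\psi(k')=\begin{cases} v-\psi(k') & \text{if }k'\ge k,\\[2pt] \dfrac{k'}{k}\,v-\psi(k') & \text{if }k'<k.\end{cases}
\end{equation*}

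The bulk of the argument is then a four-case check. When $v\le\psi(k)$ (so truthful utility is $0$), a deviation giving a non-trivial allocation yields at most $v-\psi(k')$ in the $k'\ge k$ case, which is $\le v-\psi(k)\le 0$ by monotonicity, or $\tfrac{k'}{k}v-\psi(k')\le\tfrac{k'}{k}v-\tfrac{k'}{k}\psi(k)=\tfrac{k'}{k}(v-\psi(k))\le 0$ in the $k'<k$ case, where the first inequality uses the slope condition (applied to the pair $k'<k$, giving $\psi(k')\ge\tfrac{k'}{k}\psi(k)$). When $v>\psi(k)$ (truthful utility $v-\psi(k)>0$), the zero-allocation deviations are clearly dominated, and the non-trivial deviations satisfy $v-\psi(k')\le v-\psi(k)$ for $k'\ge k$ by monotonicity, while for $k'<k$ the slope condition again gives $\tfrac{k'}{k}v-\psi(k')\le\tfrac{k'}{k}(v-\psi(k))\le v-\psi(k)$ since $k'/k<1$ and $v-\psi(k)>0$.

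I do not expect any serious obstacle: the proof is purely a matter of tracking signs and invoking the two hypotheses on $\psi$ in the right cases. The only mildly subtle point is recognizing that the slope condition is precisely what controls downward deviations in $k$ when the agent disposes of surplus \textsc{good}$_2$ -- without it, an agent with a large $k$ could profitably mimic a smaller-$k$ type, get the same effective output $k'/k$, and save on the payment. Monotonicity of $\psi$, by contrast, handles upward deviations in $k$.
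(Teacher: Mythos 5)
Your proof is correct and follows essentially the same route as the paper's: direct verification of IR, followed by a case split on whether the truthful type is above or below the threshold and on the direction of the ratio misreport, with monotonicity of $\psi$ handling deviations to larger $k'$ and the condition $\tfrac{k}{k'}\psi(k')\le\psi(k)$ handling deviations to smaller $k'$. The only difference is organizational (you fix the true type and split on $k'\gtrless k$, while the paper fixes a pair with $k'\ge k$ and checks both directions), which is immaterial.
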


Following example illustrates that {\it simple} mechanisms are not always optimal.

\begin{example}\label{ex:1} \normalfont Consider a discrete type-space: $\{(v_1,k_1) = (1,0.25), (v_2,k_2) = (6,0.5), (v_3,k_3) = (1,1)\}$ with probabilities $10a,a,a$,  respectively\footnote{Type-space in $v$ dimension can be made continuous in this example without affecting its conclusion. Continuity in $k$ dimension is not important for our results, making this example valid. Here, $a = 1/12$ is the normalizing constant.}. Let the outcome $(a_i,b_i,t_i)$ be assigned to the type $(v_i,k_i)$. If we impose a constraint that $b_i \in \{0,1\}$, then the menu $\{(a_i,b_i,t_i)\}_{i\in\{1,2,3\}} = \{(0.25,1,1), (0.5, 1, 4), (0, 0, 0)\}$ yields the maximum revenue of $14a$ (proof in Appendix \ref{subsec:ex1}).

However, it is easy to verify that the menu: $\{(0.25,1,1), (0.5, 1, 4), (0.4, 0.4, 0.4)\}$ is IC, IR and generates a {\bf strictly} higher revenue of $14.4a$. Clearly, this is not a {\it simple} mechanism as $b_3 \notin \{0,1\}$.
\end{example}

\subsection{Main Results}

We impose the following restrictions on type distribution $G$ for our results.

\begin{defn}
A distribution $G$ satisfies {\sc Condition A} if for any $k$, $v(1-G(v|k))$ is strictly concave in $v$.
\end{defn}

This is a standard condition used in the literature in other settings~\citep{Che00,devanur_haghpanah_psomas_2020}. Given a ratio $k$, this condition may be interpreted as diminishing marginal revenue (see \cite{devanur_haghpanah_psomas_2020}). This is because, conditional on $k$, $v(1-G(v|k))$ is the revenue from charging price $v$ for one unit of the primary good and $k$ units of the secondary good.

Define
\begin{equation*}
\phi(v,k) := v - \frac{1-G(v|k)}{g(v|k)}.
\end{equation*}
{\sc Condition A} ensures that the solution to $\phi(v,k)g(v|k) = 0$ and $\phi(v,k)=0$ is the same and unique\footnote{Note that $\phi(0,k)g(0,k) < 0$ and $\phi(1,k)g(1,k) > 0$ for all $k$ and that continuity of $G$ ensures continuity of $\phi(v,k)g(v|k)$.  {\sc Condition A} implies that $\phi(v,k)g(v|k)$ strictly increasing in $v$ for every $k$. Since $g(v,k) > 0$, we have the uniqueness.}. For any $k$, we denote the unique value satisfying these equations by $\phi_k^{-1}(0)$. 
Observe that, conditional on $k$;  the optimal mechanism is posted price $\phi_k^{-1}(0)$. That is, if the ratio is public information, then the seller would post this price for one unit of primary good and $k$ units of secondary good since the optimization problem reduces to that of a single object model. 

\begin{defn}
A distribution $G$ satisfies {\sc Condition B} if it satisfies {\sc Condition A} and for all $k < k'$ the following is true,
$$ \phi_k^{-1}(0) > \phi_{k'}^{-1}(0).$$
\end{defn}
This condition states that the optimal price conditional on $k$ is decreasing in $k$.  In this case, maximizing revenue point-wise (for each $k$) implies allocating one unit of primary good and $k$ units of secondary good for a price $\phi_k^{-1}(0)$.  However, this mechanism is not incentive compatible. To see why - a type with lower ratio would deviate to a type with higher ratio and pay less for more of the secondary good(unused part can be discarded).  Using {\sc Condition A} and ironing techniques, we show that the optimal mechanism, in this case, is a {\sc posted price} mechanism.

\begin{theorem}\label{theo:last}
If $G$ satisfies {\sc Condition B}, then a {\sc posted price} mechanism is optimal.
\end{theorem}

Following is an example of a distribution that satisfies {\sc Condition B} and the corresponding
optimal mechanism.

\begin{example}\label{ex:2}\normalfont
 Consider a density function $g(v,k) = \frac{2}{3}(v+2k)$. We evaluate the conditional density to $g(v|k) = \frac{v+2k}{0.5+2k}$. From this we derive the virtual valuation to, $$\phi(v,k) = \frac{1.5v^2+4kv-2k-0.5}{v+2k}.$$

We can show that $\phi(v,k)$ is strictly increasing by first order condition,  this implies that {\sc Condition A} is satisfied. Also, $$\phi_{k}^{-1}(0) = \frac{-4k + \sqrt{16k^2+12k+3}}{3}$$ is decreasing implies that {\sc Condition B} holds. Therefore, the optimal mechanism for this distribution evaluates to,

\begin{align*}
\big(f(v,k),p(v,k)\big) = 
   \begin{cases}
   (0,0,0) \qquad v \leq \rho^*\\
   (k,1,\rho^*)\quad \text{otherwise}
   \end{cases}
\end{align*}

\text{where } $\rho^* = \argmax_{\rho} \rho \big(1-G_{v}(\rho)\big)$, this evaluates to $\rho^* = \frac{\sqrt{13}-2}{3}$. This generates a revenue of $0.2931$.

Now, suppose that $k$ is {\bf public} information. Then the optimal mechanism is to simply post a price $ \phi_k^{-1}(0)$ (for one unit of primary good and $k$ units of the secondary good) for each $k$. This mechanism generates an ex-ante expected revenue of $0.2933$. This observation illustrates that there is information rent to be paid in the ratio dimension as well.
\end{example}

\begin{defn}
A distribution $G$ is said to satisfy {\sc Condition B$^\prime$} if it satisfies {\sc Condition A} and for all $k < k'$ the following is true,
$$\frac{k}{k'}\phi_{k'}^{-1}(0) \leq \phi_k^{-1}(0) \leq \phi_{k'}^{-1}(0).$$
\end{defn}

This condition states that the optimal price conditional on $k$ is increasing in $k$ but boundedly. Observe that this condition is partially complementary to {\sc Condition B}, where $\phi^{-1}_k(0)$ is decreasing. We discuss both the conditions in detail in section \ref{sec:compare}.

\begin{theorem}\label{theo:one}
If $G$ satisfies {\sc Condition B$^\prime$}, then the following {\sc ratio-dependent posted price} mechanism is optimal,
\begin{align*}
\big(f(v,k),p(v,k)\big) = 
   \begin{cases}
   (0,0,0) \qquad v \leq \phi_k^{-1}(0)\\
   \left(k,1,\phi_k^{-1}(0)\right)\quad \text{otherwise}
   \end{cases}
\end{align*}
\end{theorem}

The proof proceeds by ignoring the ratio-related incentive constraints to maximize revenue. We then argue that such a mechanism is incentive compatible. The revenue from the optimal mechanism under {\sc Condition B$^\prime$} would be the same as ex-ante optimal revenue if the ratio were public information. This mechanism admits a two-part tariff implementation in which the seller charges $\rho_p := \lim_{k \to 0^+}\phi^{-1}_{k}(0)$ for the primary good, and the secondary good is priced at $\phi^{-1}_{k}(0) - \rho_p$ for $k$ units.

\begin{example}\label{ex:3}\normalfont
 Consider a density function $g(v,k) = \frac{v^k}{\ln{2}}$ whose conditional density evaluates to $g(v|k) = v^k(k+1)$. From this we derive the virtual valuation to, $$\phi(v,k) = v - \frac{1-v^{k+1}}{v^k(k+1)}.$$

We can show that $\phi(v,k)$ is strictly increasing by first order condition, and $\phi_{k}^{-1}(0) = (\frac{1}{k+2})^{\frac{1}{k+1}}$ implies $G$ satisfies {\sc Condition B$^\prime$}. Therefore, the optimal mechanism for this distribution, using Theorem \ref{theo:one},  evaluates to,

\begin{align*}
\big(f(v,k),p(v,k)\big) = 
   \begin{cases}
   (0,0,0) \qquad v \leq (\frac{1}{k+2})^{\frac{1}{k+1}}\\
   (k,1,(\frac{1}{k+2})^{\frac{1}{k+1}})\quad \text{otherwise}
   \end{cases}
\end{align*}

The seller can implement this mechanism by charging $\frac{1}{2}$ for the primary good and $(\frac{1}{k+2})^{\frac{1}{k+1}} - \frac{1}{2}$ for $k$ units of the secondary good.

\end{example}

\begin{remark} \normalfont 
Observe that replacing the outcome $(0,0,0)$ in either the {\sc posted price} or the {\sc ratio-dependent posted price} mechanism with an outcome $(0,1,0)$ generates the same revenue without violating IC and IR constraints. It may appear that the primary good is not required to screen the buyer since it is always allocated. However,  this is a consequence of optimization subject to our sufficient conditions. Example \ref{ex:1} illustrates the case where this is not true. 

The agent’s private information is in both value and ratio dimensions. In Example \ref{ex:2}, we illustrate that the seller pays rent in the ratio dimension. The seller uses value dimension in both the optimal mechanisms we describe. Therefore, this model cannot be reduced to a one-dimensional private information model.
\end{remark}

\subsection{Optimal Program}

In this section, we derive our main results. We can restrict the class of mechanisms to optimize over due to the following result.

\begin{prop}\label{prop:red}
For every IC and IR mechanism $(f,p)$ there exists another IC and IR mechanism $(f',p')$ such that
\begin{enumerate}
    \item $\Pi(f',p') = \Pi(f,p)$, and
    \item $f'_{1}(v,k) = k f'_{2}(v,k)$ for all $(v,k).$ - {\it{\bf non-wasteful allocation}}
\end{enumerate}
\end{prop}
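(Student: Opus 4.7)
The plan is to construct $(f',p')$ from $(f,p)$ by \emph{trimming} each allocation down to the non-wasteful sub-bundle that produces the same effective consumption for the truthful type, while keeping payments unchanged. Concretely, I would define
$$m(v,k) := \min\Bigl\{\frac{f_1(v,k)}{k},\, f_2(v,k)\Bigr\},$$
and set $f'_1(v,k) := k\, m(v,k)$, $f'_2(v,k) := m(v,k)$, and $p'(v,k) := p(v,k)$. By construction, $f'_1(v,k) = k f'_2(v,k)$, giving the non-wastefulness condition, and $\min\{f'_1(v,k)/k, f'_2(v,k)\} = m(v,k)$, matching the truthful consumption under $(f,p)$. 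Revenue equivalence is immediate since $p' \equiv p$.

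Next I would verify IR and IC. For IR, note that
$$U_{(v,k)}(f'(v,k), p'(v,k)) = v\, m(v,k) - p(v,k) = U_{(v,k)}(f(v,k), p(v,k)) \geq 0,$$
where the last inequality uses IR of $(f,p)$. For IC, the crucial observation is monotonicity of the trim: since $m(v',k') \le f_1(v',k')/k'$ and $m(v',k') \le f_2(v',k')$, the inequalities $f'_1(v',k') \le f_1(v',k')$ and $f'_2(v',k') \le f_2(v',k')$ hold pointwise. Hence for any deviator of type $(v,k)$ contemplating the report $(v',k')$,
$$\min\Bigl\{\frac{f'_1(v',k')}{k},\, f'_2(v',k')\Bigr\} \le \min\Bigl\{\frac{f_1(v',k')}{k},\, f_2(v',k')\Bigr\},$$
so the deviation payoff under $(f',p')$ is weakly smaller than under $(f,p)$. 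Combined with truthful payoffs being preserved, IC of $(f,p)$ carries over to $(f',p')$.

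The main conceptual hurdle is recognizing that the trim cannot manufacture profitable deviations: the truthful agent is indifferent because her utility is governed by the $\min$, while every potential mimicker sees both coordinates weakly shrunk and is therefore weakly worse off than under $(f,p)$. The fact that payments are untouched means there is no delicate balancing of monetary transfers to manage. Aside from this observation, the verification is a routine pointwise computation requiring no assumption on the distribution $G$ and no Myersonian machinery.
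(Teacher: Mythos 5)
Your proposal is correct and is essentially the paper's own proof: the paper defines $(f',p')$ by the same trimming operation (written as two cases according to which of $f_1(v,k)/k$ and $f_2(v,k)$ is smaller, which is exactly your $m(v,k)$), keeps payments fixed, and verifies IC via the same chain of inequalities — truthful utility preserved, deviation utility weakly reduced because each coordinate of the reported bundle weakly shrinks. No substantive difference.
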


Proposition \ref{prop:red} implies that, to find the optimal mechanism it is without loss of generality to focus on the class of mechanisms with the property that allocation of \textsc{good}$_1$ is $k$ times allocation of \textsc{good}$_2$. To prove this, we start with an arbitrary IC and IR mechanism $(f,p)$ and construct the desired form mechanism $(f',p')$ while keeping the revenue constant. $(f',p')$ is derived from $(f,p)$ by reducing the allocation of one of the goods so that the allocation ratio is as reported. The payments remain the same. 

These {\bf non-wasteful} mechanisms are denoted by,
$$\mathcal{M}:= \{(f,p): f_1(v,k) = kf_2(v,k)~\text{for all}~ (v,k)\}.$$

\noindent \underline{Note}:
 If $(f,p) \in \mathcal{M}$ then $\min \left\{\frac{f_1(v,k)}{k},f_2(v,k)\right
\} = f_2(v,k)$ for all $(v,k)$. In the next Proposition and rest of the paper we use the following fact without explicitly stating: for any $k$, $$U_{(v,k)}(f(v',k),p(v',k)) = vf_2(v',k) - p(v',k)~\text{for all}~v,v'.$$

This allows us to focus only on one of the allocation function components $f_2$, and deduce $f_1$ from it in the final step. However, this does not reduce the problem to a one-dimensional exercise as incentive constraints across the ratio dimension are crucial to the optimal program. The following result makes this point clear.

\subsubsection{Characterization of IC Mechanisms}

We characterize the IC mechanisms in the class $\mathcal{M}$.

\begin{prop}\label{prop:ICC}
$(f,p) \in \mathcal{M}$ is IC if and only if the following are true for any $(v,k)$,
\begin{enumerate}
\itemsep0em 
\item [(1)]$\displaystyle f_2(v,k) \leq f_2(v',k)$ for all $v' > v$,
\item [(2)]$\displaystyle p(v,k) = p(0,1) + vf_2(v,k) - \int_{0}^{v}f_2(t,k)dt$,
\item [(3)]$\displaystyle \int_{0}^{v}f_2(t,k')dt \leq \int_{0}^{v}f_2(t,k)dt$ for all $k' > k$,
\item [(4)]$\displaystyle \int_{0}^{v\frac{k}{k'}}f_2(t,k)dt \leq \int_{0}^{v}f_2(t,k')dt$ for all $k' > k$.
\end{enumerate}
\end{prop}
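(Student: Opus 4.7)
The plan splits into necessity and sufficiency. As preliminaries, define $V(v,k) := v f_2(v,k) - p(v,k)$, and observe that for $(f,p) \in \mathcal{M}$ the misreport utility from $(v,k)$ to $(v',k')$ simplifies to $v\min\{k'/k,1\}\,f_2(v',k') - p(v',k')$, since the allocation at $(v',k')$ is already in the ratio $k'$.

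For necessity, I would first fix $k$ and restrict to within-$k$ deviations: these constitute a one-dimensional screening problem in $v$, and standard Myerson arguments yield monotonicity (1) together with the payment formula $p(v,k) = p(0,k) + vf_2(v,k) - \int_0^v f_2(t,k)\,dt$. To replace $p(0,k)$ by $p(0,1)$, the IC constraints $(0,k) \rightarrow (0,k')$ in both directions reduce to $-p(0,k) \geq -p(0,k')$ and vice versa, forcing $p(0,k)$ to be constant in $k$; this yields (2). For (3), apply IC to $(v,k) \rightarrow (v,k')$ with $k' > k$: since $\min\{k'/k,1\}=1$, the misreport utility equals $v f_2(v,k') - p(v,k') = V(v,k')$, so IC becomes $V(v,k) \geq V(v,k')$, which by (2) is (3). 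For (4), apply IC to the carefully chosen deviation $(v,k') \rightarrow (vk/k', k)$ with $k' > k$: substituting (2) at $(vk/k', k)$ into $v(k/k') f_2(vk/k',k) - p(vk/k',k)$ causes the $f_2(vk/k',k)$ terms to cancel, leaving exactly $V(vk/k', k)$, so IC yields $V(v,k') \geq V(vk/k', k)$, i.e., (4).

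For sufficiency, the key observation is that the misreport utility from $(v,k) \rightarrow (v',k')$ equals $\tilde v\, f_2(v',k') - p(v',k')$ where $\tilde v := v\min\{k'/k,1\}$, which is precisely the misreport utility a hypothetical type $(\tilde v, k')$ would receive from reporting $(v', k')$. Given (1) and (2) at $k'$, the standard one-dimensional Myerson reduction (monotone $f_2$ plus envelope payment implies within-$k'$ IC) shows this quantity is at most $V(\tilde v, k')$. It then remains to show $V(\tilde v, k') \leq V(v, k)$. When $k' \geq k$, $\tilde v = v$ and (3) gives the inequality directly. When $k' < k$, $\tilde v = vk'/k$, and applying (4) after swapping the roles of $k$ and $k'$ (the new pair satisfies the hypothesis $k_{\text{new}}' > k_{\text{new}}$ by construction) yields $V(vk'/k, k') \leq V(v,k)$.

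The main obstacle is the bookkeeping in sufficiency: one must verify that condition (4), although stated only for $k' > k$, captures the reverse case of cross-$k$ IC after relabeling, and that this relabeling precisely matches the $\tilde v = vk'/k$ scaling. The ``hypothetical type $(\tilde v, k')$'' reduction is the conceptual payoff, turning every cross-$k$ IC into a within-$k'$ Myerson step composed with a single application of (3) or (4), which also explains why exactly two integral inequalities are needed to cover all cross-$k$ deviations.
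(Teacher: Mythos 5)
Your proposal is correct and follows essentially the same route as the paper: within-$k$ deviations handled by the standard Myerson monotonicity-plus-envelope argument, the constant $p(0,\cdot)$ pinned down by the $(0,k)\leftrightarrow(0,k')$ constraints, necessity of (3) and (4) from the deviations $(v,k)\rightarrow(v,k')$ and $(v,k')\rightarrow(v\frac{k}{k'},k)$, and sufficiency by reducing any cross-$k$ deviation to a within-$k'$ step at the rescaled value $\tilde v=v\min\{k'/k,1\}$ followed by one application of (3) or (4). The ``hypothetical type $(\tilde v,k')$'' framing is a clean way of packaging exactly the chain of inequalities the paper writes out.
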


The conditions (1) and (2) in Proposition \ref{prop:ICC} correspond to IC constraints between two types on a horizontal line in the type-space (see Figure \ref{fig:icillus}). Mechanisms in $\mathcal{M}$ have the property of reducing the IC constraints on any horizontal line equivalent to that of a one-dimensional problem. This is the same as \citet{Myerson81}'s IC characterization when restricted to any $k$. However, Proposition \ref{prop:ICC} shows that some `vertical' and `diagonal' constraints are enough to guarantee the incentive compatibility of the mechanism. Condition (3) corresponds to the vertical constraints, while (4) corresponds to the diagonal constraints. The arrows in Figure \ref{fig:icillus} indicate the direction in which the incentive constraints need to be satisfied. 

Interestingly, these `local' constraints are enough to guarantee global incentive compatibility. Describing optimal mechanisms in multidimensional models is difficult partly because binding constraints cannot be pinned down (\citet{rochet_chone_1998}). However, we can do so in this model due to the nature of incentive constraints.

\begin{figure}[!hbt]
\centering
\includegraphics[width=2.7in]{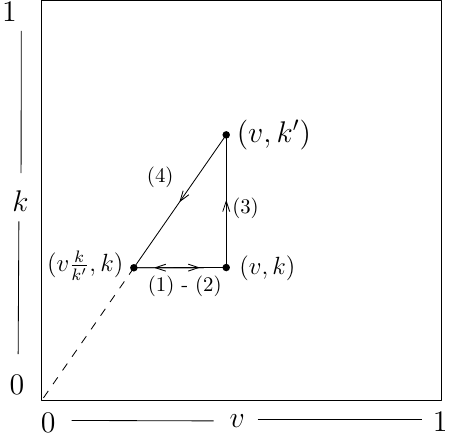}
\caption{IC Constraints}
\label{fig:icillus}
\end{figure}

We state two lemmas which we use in our further analysis.

\begin{lemma}\label{lem:zero}
If a mechanism $(f,p)$ is IC, then $p(0,k) = p(0,1)$ for all $k$.
\end{lemma}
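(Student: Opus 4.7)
The plan is to exploit the fact that at $v=0$ the utility function collapses to depend only on the transfer. Concretely, for any type $(0,k)$ and any outcome $(a_1,a_2,t)$, we have
\[
U_{(0,k)}(a_1,a_2,t) = 0\cdot \min\!\left\{\tfrac{a_1}{k},a_2\right\} - t = -t,
\]
so the allocation received is irrelevant to a zero-value agent: only the payment matters. This means the IC constraint between any two types of the form $(0,k)$ and $(0,k')$ reduces to a simple inequality between $p(0,k)$ and $p(0,k')$, with no allocation terms surviving.

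The steps I would carry out: first, fix an arbitrary $k\in K$ and write down the IC constraint $(0,k)\to(0,1)$. By the observation above, this reads $-p(0,k) \geq -p(0,1)$, i.e.\ $p(0,k)\leq p(0,1)$. Second, write down the reverse constraint $(0,1)\to(0,k)$; by the same simplification it yields $p(0,1)\leq p(0,k)$. Combining the two inequalities gives $p(0,k)=p(0,1)$. Since $k$ was arbitrary, the claim follows.

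The main (non-)obstacle worth noting is that the lemma is asserted for any IC mechanism, not only those in the non-wasteful class $\mathcal{M}$, so one cannot simplify $\min\{f_1/k,f_2\}$ to $f_2$ via Proposition~\ref{prop:red}. Fortunately this is not needed: the collapse $U_{(0,k)}(\cdot,\cdot,t)=-t$ holds directly from $v=0$ regardless of the shape of $f_1,f_2$, so the argument goes through for every IC mechanism without invoking the non-wastefulness reduction.
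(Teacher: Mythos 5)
Your proposal is correct and is essentially the paper's own proof: both use the two IC constraints $(0,1)\to(0,k)$ and $(0,k)\to(0,1)$, which at $v=0$ reduce to $-p(0,1)\geq -p(0,k)$ and $-p(0,k)\geq -p(0,1)$, forcing equality. Your additional remark that no restriction to $\mathcal{M}$ is needed is accurate and consistent with how the paper states and uses the lemma.
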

\begin{proof}
For any $k$, $(0,1) \rightarrow (0,k)$ implies that $-p(0,1) \geq -p(0,k)$ while $(0,k) \rightarrow (0,1)$ implies that $-p(0,k) \geq -p(0,1)$.
\end{proof}

In line with other models in mechanism design, the following standard result holds in this setting too.

\begin{lemma}\label{lem:ir}
An IC mechanism $(f,p)$ is individually rational if and only if,
$$p(0,1) \leq 0.$$
\end{lemma}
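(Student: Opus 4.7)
The plan is to prove both directions separately, using the utility definition for the forward direction and combining IC with Lemma \ref{lem:zero} for the reverse direction.

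For the ``only if'' direction, I would simply evaluate the IR condition at the type $(0,1)$. Since $U_{(0,1)}(f(0,1),p(0,1)) = 0 \cdot \min\{f_1(0,1),f_2(0,1)\} - p(0,1) = -p(0,1)$, the IR inequality at this particular type immediately yields $p(0,1) \le 0$. This half requires nothing beyond unpacking the definition.

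For the ``if'' direction, suppose $(f,p)$ is IC with $p(0,1) \le 0$, and fix an arbitrary type $(v,k) \in V \times K$. The key idea is to compare the truthful report to the misreport $(0,k)$, which keeps the ratio fixed but zeroes out the value component. By IC applied to the constraint $(v,k) \to (0,k)$, we have
\[
U_{(v,k)}(f(v,k),p(v,k)) \;\geq\; U_{(v,k)}(f(0,k),p(0,k)) \;=\; v\min\!\left\{\tfrac{f_1(0,k)}{k},\,f_2(0,k)\right\} - p(0,k).
\]
Now I would invoke Lemma \ref{lem:zero} to replace $p(0,k)$ by $p(0,1)$, and observe that $v \geq 0$ together with the nonnegativity of $\min\{f_1(0,k)/k, f_2(0,k)\}$ (the allocation lies in $[0,1]^2$) yields
\[
U_{(v,k)}(f(v,k),p(v,k)) \;\geq\; -p(0,1) \;\geq\; 0,
\]
which is the IR condition at $(v,k)$.

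There is no real obstacle here; the only subtlety is recognizing that the ``right'' deviation to exploit is the one to $(0,k)$ rather than to $(0,1)$, so that the ratio component of the reported type matches and Lemma \ref{lem:zero} can be used to pin down the payment at the zero-value type. Once that is in place, the inequality collapses immediately because the value term contributes nonnegatively.
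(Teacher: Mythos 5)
Your proof is correct and follows essentially the same route as the paper's: the forward direction evaluates IR at the type $(0,1)$, and the reverse direction chains the incentive constraint $(v,k)\rightarrow(0,k)$ with the nonnegativity of the allocation and Lemma \ref{lem:zero} to conclude $U_{(v,k)}(f(v,k),p(v,k)) \geq -p(0,k) = -p(0,1) \geq 0$. No gaps; the choice of deviation to $(0,k)$ rather than $(0,1)$ is exactly the step the paper takes as well.
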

\begin{proof}
Fix an IC mechanism $(f,p)$. Suppose that  $(f,p)$ is IR. Consider the type $(0,1)$. IR implies that $U_{(0,1)}(f(0,1),p(0,1)) \geq 0$, this simplifies to $p(0,1) \leq 0$. To show the other way, fix any $(v,k)$ and observe that $U_{(v,k)}(f(v,k),(v,k)) = v\min\{\frac{f_1(v,k)}{k},f_2(v,k)\}-p(v,k) \geq v\min\{\frac{f_1(0,k)}{k},f_2(0,k)\}-p(0,k) \geq -p(0,k) \geq 0$. The first inequality is from the incentive constraint $(v,k) \rightarrow (0,k)$, the second from the fact that allocation functions are non-negative. The third is true since $p(0,1) \leq 0$ implies that $p(0,k) \leq 0$ for all $k$ due to Lemma \ref{lem:zero}. 
\end{proof}

Using Lemma \ref{lem:zero}, Lemma \ref{lem:ir}, and Proposition \ref{prop:ICC}, 
the optimal program can be summarized as follows, 
%

\begin{samepage}
\begin{center}
{\bf Optimal Program}
\end{center}
\begin{align*}
\max_{f_2:V\times K \rightarrow [0,1]} \int_{0}^{1}\Bigg[ \int_{0}^{1} \phi(v,k) f_2(v,k)g(v|k) dv\Bigg] g_k(k)dk \tag{O}\label{eq:obj}\\
f_2(v,k) \leq f_2(v',k)~\text{for all}~v < v',k,\tag{C1}\label{eq:mon}\\
\int_{0}^{v}f_2(t,k')dt \leq \int_{0}^{v}f_2(t,k)dt~\text{for all}~v, k' > k,\tag{C2}\label{eq:icv}\\
\int_{0}^{v\frac{k}{k'}}f_2(t,k)dt \leq \int_{0}^{v}f_2(t,k')dt~\text{for all}~v, k' > k\tag{C3}\label{eq:icd},\\
\end{align*}
\end{samepage}

Notice that $f_2$ uniquely determines $f_1$ and $p$ by Propositions \ref{prop:red} and \ref{prop:ICC}, respectively. Therefore, these decision variables are suppressed in the above optimal program.

\subsubsection{Proof of Theorem \ref{theo:last}.}\label{subsec:proplast}
We solve for the optimal mechanism by ignoring the constraint (C3). We show that a {\sc posted price} mechanism is optimal for this reduced problem. We first prove the following Lemma towards this.

\begin{lemma}\label{lem:impmnt}
If $G$ satisfies {\sc Condition A} then for every mechanism $(f,p) \in \mathcal{M}$ that satisfies constraints (C1), (C2), then the mechanism $(f',p') \in \mathcal{M}$ defined by,
\begin{displaymath}
f_2'(v,k) = \left\{ \begin{array}{ll}
0 & \textrm{if $v \le 1-\int_{0}^{1}f_2(t,k)dt$}\\
1 & \textrm{otherwise.}
\end{array} \right.
\end{displaymath}
satisfies constraints (C1), (C2) and generates more(weakly) expected revenue than $(f,p)$.
\end{lemma}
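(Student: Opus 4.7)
The plan is to verify, in turn, that $(f',p')$ satisfies (C1), that it satisfies (C2), and finally that it weakly improves the objective (\ref{eq:obj}) relative to $(f,p)$, pointwise in $k$. All three parts are organised around the threshold $\alpha(k) := 1 - \int_0^1 f_2(t,k)\, dt$, which is chosen precisely so that by construction $f_2'(v,k) = \mathbf{1}[v > \alpha(k)]$ and, crucially, $\int_0^1 f_2'(t,k)\, dt = \int_0^1 f_2(t,k)\, dt$ for every $k$.

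The constraint (C1) is immediate because $v \mapsto f_2'(v,k)$ is a $0$-to-$1$ step function and hence weakly increasing in $v$. For (C2), I would evaluate the (C2) constraint of $(f,p)$ at $v = 1$, obtaining $\alpha(k') \ge \alpha(k)$ whenever $k' > k$. Since $\int_0^v f_2'(t,k)\, dt = \max\{v - \alpha(k),\, 0\}$, this quantity is weakly decreasing in $k$ for each $v$, which is exactly (C2) for $(f',p')$.

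The revenue comparison is the heart of the argument, and I would handle it via a rearrangement (bathtub) inequality performed pointwise in $k$. Fix $k$ and write $h(v) := \phi(v,k)\, g(v|k)$, which is strictly increasing in $v$ by Condition A. Let $M := \int_0^1 f_2(t,k)\, dt$ and $D(v) := f_2'(v,k) - f_2(v,k)$. By construction $\int_0^1 D(v)\, dv = 0$, while $D(v) \le 0$ on $[0,\, 1-M]$ (where $f_2' = 0$) and $D(v) \ge 0$ on $(1-M,\, 1]$ (where $f_2' = 1$). On each piece the inequality $h(v)\, D(v) \ge h(1-M)\, D(v)$ holds, using $h(v) \le h(1-M)$ combined with $D(v) \le 0$ on the left, and $h(v) \ge h(1-M)$ combined with $D(v) \ge 0$ on the right. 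Summing the two pieces gives
$$\int_0^1 h(v)\, D(v)\, dv \;\ge\; h(1-M) \int_0^1 D(v)\, dv \;=\; 0,$$
and since the inner integrand of (\ref{eq:obj}) at each $k$ is exactly $h(v)\, f_2(v,k)$, integrating this pointwise improvement against $g_k(k)$ delivers the weak improvement in expected revenue.

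The main obstacle is the simultaneous need to preserve (C2) and to keep the total allocation at each $k$ unchanged: these requirements are reconciled by the particular choice $\alpha(k) = 1 - \int_0^1 f_2(t,k)\, dt$, which makes $\alpha$ inherit monotonicity from the (C2) inequality for $(f,p)$ while keeping mass at each $k$ fixed. Condition A then enters only at the very end, to guarantee monotonicity of $h$ so that the rearrangement step goes through; without a monotone $h$ the pointwise split at $1-M$ would fail and one could only bound the difference, not sign it.
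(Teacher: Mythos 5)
Your proposal is correct and follows essentially the same route as the paper: (C1) is immediate, (C2) for $(f',p')$ is derived from the $v=1$ instance of (C2) for $(f,p)$ via the explicit formula $\int_0^v f_2'(t,k)\,dt = \max\{v-\alpha(k),0\}$, and the revenue comparison is the same split of $\int_0^1 \phi(v,k)g(v|k)\bigl(f_2'(v,k)-f_2(v,k)\bigr)\,dv$ at the threshold, bounding each piece by the value of $\phi(\cdot,k)g(\cdot|k)$ there and using that the total allocated mass at each $k$ is preserved. Your rearrangement-inequality phrasing is just a cleaner packaging of the paper's own estimate.
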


Lemma \ref{lem:impmnt} implies that, without loss of generality, we can focus on mechanisms $(f,p) \in \mathcal{M}$ such that there exists $\rho(k)$ {\bf increasing} in $k$ and,
\begin{equation}\label{eq:theo}
f_2(v,k) = \left\{ \begin{array}{ll}
0 & \textrm{if $v \le \rho(k)$}\\
1 & \textrm{otherwise.}
\end{array} \right.
\end{equation}
$\rho$ is increasing because (C2) is satisfied in Lemma \ref{lem:impmnt}, and due to the definition of $f'$ in Lemma \ref{lem:impmnt}. We show that we can improve such a mechanism to a {\sc posted price} mechanism by ironing (Appendix \ref{subsec:ironing}).

\subsubsection{Proof of Theorem \ref{theo:one}.}
Ignoring the constraints (C1), (C2), and (C3), a point-wise maximization (for each $k$) of the objective function (\ref{eq:obj}) implies that the optimal allocation function $f_2$ is as in the statement of the theorem, since $\phi(v,k) \leq 0$ for all $(v,k)$ with $v \leq \phi_k^{-1}(0)$ and $\phi(v,k) > 0$ for all $(v,k)$ with $v > \phi_k^{-1}(0)$, due to {\sc Condition A}. {\sc Condition B$^\prime$} implies that this mechanism is indeed {\sc ratio-dependent posted price} mechanism. We have already shown this mechanism to be IC (Proposition \ref{prop:rdic}). Hence, the ignored constraints hold.

\subsubsection{On sufficient conditions of Theorems \ref{theo:last} and \ref{theo:one}}\label{sec:compare}

{\sc Condition B $\&$ B$^\prime$} are partially complementary in the sense that the former states that the function $\phi^{-1}_k(0)$ is decreasing while the latter says it is increasing (and rate of increase is bounded).

In the following example, {\sc Condition A} is satisfied but neither {\sc Conditions B} or {\sc B$^\prime$} hold.  We illustrate that the optimal mechanism is not in the class of {\it simple mechanisms}.

\begin{example} \label{ex:4} \normalfont
$V \times K \equiv [0,1]\times \{0.75,1\}$. The density $g$ is given by,
\begin{displaymath}
g(v,k) = \left\{ \begin{array}{ll}
0 & \textrm{if $(v < \frac{1}{2},~k = 0.75)$ or $(v<\frac{2}{3},~k=1)$}\\
100a & \textrm{if $v \geq \frac{1}{2},~k=0.75$}\\
a & \textrm{if $v \in [\frac{2}{3},\frac{3}{4}),~k=1$}\\
10a &\textrm{otherwise.}
\end{array} \right.
\end{displaymath}
Here $a$ is the normalizing constant. It is easy to verify that $g(v,k)$ satisfies {\sc condition a}.  We now find the optimal deterministic mechanism in the class of non-wasteful mechanisms. Let $(f,p)$ be a deterministic mechanism\footnote{That is, $f_2(\cdot) \in \{0,1\}$.}.  It is defined by two prices $\rho_1,\rho_2$ for types with $k =0.75,1$, respectively.  Hence,
\begin{displaymath}
(f(v,k),p(v,k)) = \left\{ \begin{array}{ll}
(0,0,0) & \textrm{if $(v < \rho_1,~k = 0.75)$ or $(v < \rho_2,~k=1)$}\\
(k,1,\rho_1) & \textrm{if $v \geq \rho_1,~k=0.75$}\\
(k,1,\rho_2) & \textrm{if $v \geq \rho_2,~k=1$.}
\end{array} \right.
\end{displaymath}
`Horizontal' incentive constraints are trivially satisfied.  Incentive constraints of the type $(v,0.75) \rightarrow (v,1)$ imply $v - \rho_1 \geq v - \rho_2$ for any $v \geq \rho_1$. Incentive constraints of the type $(v,1) \rightarrow (v,0.75)$ imply $v - \rho_2 \geq 0.75v - \rho_1$ for any $v \geq \rho_2$. Together the constraints imply $0.75 \rho_2 \leq \rho_1 \leq \rho_2$. It is easy to see that the constraint $0.75 \rho_2 \leq \rho_1$ binds. The revenue expression (since $\rho_1 \geq 0.5$ and $\rho_2 \geq \frac{2}{3}$) is given by
\begin{equation*}
-(100+\frac{16}{9})a\rho^2_1 + (100+\frac{10}{3}+1)a\rho_1
\end{equation*}
This is maximized at $\rho_1 = 0.512$ which gives a revenue of $(26.74)a$. 

Now, consider a non-deterministic mechanism $(f',p')$ given by,
\begin{displaymath}
(f'(v,k),p'(v,k)) = \left\{ \begin{array}{ll}
(0,0,0) & \textrm{if $(v < \frac{1}{2},~k = 0.75)$ or $(v < \frac{2}{3},~k=1)$}\\
(k,1,\frac{1}{2}) & \textrm{if $v \geq \frac{1}{2},~k=0.75$}\\
(k,1,\frac{11}{16}) & \textrm{if $v \geq \frac{3}{4},~k=1$}\\
(\frac{3k}{4},\frac{3}{4},\frac{1}{2}) & \textrm{if $v \in [\frac{1}{2},\frac{3}{4}),~k=1$.}
\end{array} \right.
\end{displaymath}
is incentive compatible and generates a revenue of $(26.76)a$ which is strictly higher than the optimal deterministic mechanism.

\end{example}

In proving Theorem \ref{theo:last}, we use concavity property of {\sc Condition A} (in Lemma \ref{lem:impmnt}).  This condition cannot be weakened in this proof approach. That is not to say {\sc Condition A} is necessary for the result, the next section illustrates this fact for independent types.

For the result in Theorem \ref{theo:one},  however, we can replace {\sc Condition A} with a more standard regularity condition, that $\phi(v,k)$ is strictly increasing in $v$. This is true since we only need the optimal price conditional on $k$ to be increasing (and bounded).  For a detailed comparison of the regularity condition with {\sc Condition A} see \citet{devanur_haghpanah_psomas_2020}, Section 6.1.

\subsubsection{Independent types}

The following proposition describes the optimal mechanism when the value and ratio random variables are independent. This result does not require any other condition on the type distribution. In particular, independence allows for {\sc Condition A} to be violated.

\begin{prop}\label{prop:ind}
If $g(v,k) = g_{v}(v)g_{k}(k)$, then the following {\sc posted price} mechanism is optimal,
\begin{align*}
\big(f(v,k),p(v,k)\big) &= 
   \begin{cases}
   (k,1,p^*)  \qquad v \geq p^*\\
   (0,0,0)  \qquad \text{otherwise}
   \end{cases}\\
\text{where }p^* \text{ is any } p \text{ that maximizes } p \big(1-G_{v}(p)\big)
\end{align*}
\end{prop}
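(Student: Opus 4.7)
The plan is to exploit independence to decouple the two-dimensional objective into a weighted family of one-dimensional problems, and then invoke Myerson's classical single-parameter bound on each slice. Under $g(v,k) = g_v(v)g_k(k)$ the conditional density is $g(v|k) = g_v(v)$, so the virtual valuation $\phi(v,k) = v - (1-G_v(v))/g_v(v) =: \phi(v)$ depends only on $v$. The objective~\eqref{eq:obj} then rewrites as
$$\int_0^1 g_k(k) \left[\int_0^1 \phi(v)\, f_2(v,k)\, g_v(v)\, dv\right] dk,$$
so for each fixed $k$ the inner bracket is a standalone 1D virtual-surplus integral, weighted by $g_k(k)$. Crucially, $\phi(v)$ no longer depends on $k$, which is what lets the two-dimensional coupling constraints~\eqref{eq:icv}--\eqref{eq:icd} become slack at the optimum.

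For the upper bound, fix $k$ and set $q_k(\cdot) := f_2(\cdot, k) : [0,1] \to [0,1]$, which is non-decreasing by~\eqref{eq:mon}. Such a $q_k$ is a valid allocation rule for a 1D IC mechanism on $V$ with marginal distribution $G_v$, and a routine integration-by-parts argument (parallel to the derivation behind Proposition~\ref{prop:ICC}(2)) identifies its expected revenue with $\int_0^1 \phi(v) q_k(v) g_v(v)\, dv$. By Myerson's single-parameter optimality, which does not require regularity, the revenue of any such 1D IC mechanism is bounded above by $\max_p p(1 - G_v(p)) = p^*(1-G_v(p^*))$, attained by a deterministic posted price. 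Hence
$$\int_0^1 \phi(v)\, f_2(v,k)\, g_v(v)\, dv \;\leq\; p^*(1 - G_v(p^*)) \quad \text{for every } k,$$
and integrating against $g_k$ yields $\Pi(f,p) \leq p^*(1-G_v(p^*))$ for every feasible mechanism.

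Finally, I would verify that the proposed mechanism attains this bound. It sets $f_2(v,k) = \mathbf{1}[v \geq p^*]$ independent of $k$, which trivially satisfies~\eqref{eq:mon}, makes~\eqref{eq:icv} an equality, and satisfies~\eqref{eq:icd} because $vk/k' \leq v$ and $f_2$ is non-decreasing. It is IC and IR by Proposition~\ref{prop:rdic} applied with the constant function $\psi \equiv p^*$. A direct computation of $\int_{p^*}^1 \phi(v) g_v(v)\, dv$ (by integration by parts, or equivalently by noting that the induced 1D mechanism is precisely the Myerson optimum) evaluates to $p^*(1-G_v(p^*))$, matching the bound slice-by-slice and thus in expectation. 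The one subtlety is appealing to 1D optimality without imposing regularity of $G_v$: this rests on the classical fact that the single-item revenue-maximizing mechanism is always a deterministic posted price, after which the remaining steps are purely arithmetic.
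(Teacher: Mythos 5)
Your proposal is correct and takes essentially the same route as the paper: drop the cross-$k$ constraints (C2)--(C3), observe that independence makes the virtual valuation $k$-free so each slice reduces to Myerson's one-buyer problem solved by the posted price $p^*$, and then confirm the resulting $k$-independent allocation satisfies the ignored constraints. Your only cosmetic differences are phrasing it as an explicit upper bound plus attainment and checking (C2)--(C3) by direct computation rather than via Propositions \ref{prop:rdic} and \ref{prop:ICC}.
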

\begin{proof}
We solve the reduced problem by ignoring constraints (C2) and (C3); this can be written as:

Using $g(v|k) = g_v(v)$ we rewrite (O) as,
\begin{align*}
\max_{f_2:V\times K \rightarrow [0,1]} \int_{0}^{1}\Bigg[\int_{0}^{1} \big[v - \frac{1-G_{v}(v)}{g_{v}(v)}\big] g_{v}(v) f_2(v,k) dv \Bigg] g_{k}(k) dk. \tag{O}\\
f_2(v,k) \leq f_2(v',k)~\text{for all}~v < v',k\tag{C1}\label{eq:mon}.
\end{align*}

We first maximize the objective function point-wise for each $k$, while satisfying the constraint for that $k$. To that end, fix some $k$, and observe that maximizing the term inside large bracket along with the monotonocity constraint is the same as in the standard Myerson's problem for a general distribution. Therefore, the solution of $f_2$, as described in \citet{Myerson81}, is a step function as follows,
\begin{align*}
f_2(v,k) &= 
   \begin{cases}
   1  \qquad v \geq \rho^*\\
   0  \qquad \text{otherwise}
   \end{cases}\\
\rho^* &\text{ is any }\rho \text{ that maximizes } \rho\big(1-G_{v}(\rho)\big)
\end{align*}
Since we have picked an arbitrary $k$, and this allocation function is independent of $k$, the point-wise maximization must yield a {\sc posted price} mechanism. We need to verify that the constraints (C2) and (C3) are also satisfied. But since we have shown in Proposition \ref{prop:rdic} that a {\sc posted price} mechanism is IC mechanism; this fact together with Proposition \ref{prop:ICC} implies constraints (C2) and (C3) are satisfied.
\end{proof}

\section{Concluding Remarks}
Often, models in multidimensional are intractable, even in the two-dimensional case. Even if some models are tractable, it is hard to derive a reduced-form solution for the optimal mechanism. This paper considers a natural two-dimensional private information model with a ‘separation’ in the dimensions. While one dimension captures the value of the bundle, the other represents the ratio of consumption. This feature helps us solve the problem and provide a reduced-form solution that is simple and intuitive. The {\sc posted price} mechanism can be described by one parameter and involves a finite menu of outcomes. While the {\sc ratio-dependent posted price} mechanism involves a potentially infinite size of the menu, it has a simple feature of allocating the primary good fully and the secondary good in the desired ratio.

There are three main directions one could extend this work. First, to explore results in a broader class of distributions, and identifying non-wasteful mechanisms beyond {\sc ratio-dependent posted price} mechanism. Second, to analyze a multi-good perfect complements model. Third, to consider a scenario in which multiple agents compete for the same pair of complementary goods.

\bibliographystyle{ecta}
\bibliography{compl}

\newpage

\appendix

\section{Appendix: Omitted Proofs}
\label{sec:appa}

\subsection{Proof of the Proposition \ref{prop:rdic}.}\label{subsec:prop1}
\begin{proof}
Consider a {\sc ratio-dependent posted price} mechanism $(f,p)$ defined by a function $\psi$. We first show that $(f,p)$ is IR. For any type $(v,k)$,
\begin{displaymath}
U_{(v.k)}(f(v,k),p(v,k)) = \left\{ \begin{array}{ll}
0 & \textrm{if $v \le \psi(k)$}\\
v - \psi(k) & \textrm{otherwise.}
\end{array} \right.
\end{displaymath}

Clearly, $U_{(v.k)}(f(v,k),p(v,k)) \geq 0$ and hence $(f,p)$ is IR. We now show that $(f,p)$ is IC. Without loss of generality, consider any two representative types $(v,k), (v',k')$ such that $k' \geq k$. Note that $\frac{k}{k'}\psi(k') \leq \psi(k) \leq \psi(k')$.\\

\noindent \underline{$(v,k) \rightarrow (v',k').$} Note that $\big(f(v',k'),p(v',k')\big)$ is either $(0,0,0)$ or $(k',1,\psi(k'))$, we only need to check deviation to the latter outcome because IR implies $(v,k)$ does not deviate to a type with outcome $(0,0,0)$. We check deviation to the outcome $(k',1,\psi(k'))$ in two cases.\\
{\it Case 1: $v \leq \psi(k)$.} $(v,k)$ has no incentive to deviate to $(v',k')$ because
$$U_{(v,k)}(0,0,0) = 0 \geq v - \psi(k) \geq v - \psi(k') = v \min\{\frac{k'}{k},1\} - \psi(k') = U_{(v,k)}(k',1,\psi(k')).$$
{\it Case 2: $v > \psi(k)$.}
$$U_{(v,k)}(k,1,\psi(k)) = v - \psi(k) \geq v - \psi(k') =  v \min\{\frac{k'}{k},1\} - \psi(k') = U_{(v,k)}(k',1,\psi(k')),$$
The second inequality in the first case and the inequality in the second case come from the fact that $\psi(k) \leq \psi(k')$. This means $(v,k)$ has no incentive to deviate to $(v',k').$\\

\noindent \underline{$(v',k') \rightarrow (v,k).$} Again we only need to check $(v',k')$ deviating to the outcome $(k,1,\psi(k))$.\\
{\it Case 1: $v' \leq \psi(k')$.}
$$U_{(v',k')}(0,0,0) = 0 \geq \psi(k') \frac{k}{k'} - \psi(k) \geq v' \frac{k}{k'} - \psi(k) = v' \min\{\frac{k}{k'},1\} - \psi(k) = U_{(v',k')}(k,1,\psi(k)).$$
The first inequality comes from the condition that $\psi(k) \geq  \frac{k}{k'}\psi(k')$. \\
{\it Case 2: $v' > \psi(k')$.}
$$U_{(v',k')}(k',1,\psi(k')) = v' - \psi(k') \geq v'\frac{k}{k'} - \psi(k) =  v' \min\{\frac{k}{k'},1\} - \psi(k) = U_{(v',k')}(k,1,\psi(k)).$$
The inequality comes from the following argument. $\psi(k) \geq  \frac{k}{k'}\psi(k')$ implies $\psi(k') (1 - \frac{k}{k'}) \geq \psi(k') - \psi(k)$. Since $v' > \psi(k')$, we have $v' (1 - \frac{k}{k'}) \geq \psi(k') - \psi(k)$. Rearranging the terms we get the inequality.
\end{proof}

\subsection{Proof of Example \ref{ex:1}}\label{subsec:ex1}
We solve the following reduced optimal program which uses only a subset of the constraints.
$$\max_{\{(a_i,b_i,t_i)\}_{i \in \{1,2,3\}}} (10t_1+ t_2+t_3)a \textrm{ subject to}$$
\begin{align}  
    \label{ir}&IR~(v_3,k_3):& t_3                                         &\le \min\{ a_3, b_3\}\\
        \label{ic1to3}&(v_1,k_1) \rightarrow (v_3,k_3):&    t_1&\le t_3 + \min\{4a_1,b_1 \}-\min\{4 a_3, b_3\}   \\
         \nonumber&(v_2,k_2) \rightarrow (v_1,k_1),(v_2,k_2) \rightarrow (v_3,k_3):& t_2&\le 6\min \{ 2a_2, b_2\} \\
         \label{ic2to1}&&&- \max \big\{  6\min \{2 a_1, b_1\}-t_1, 6\min \{2 a_3, b_3\}-t_3 \big\}\\
    \label{all}&Feasibility:&~& a_i \in [0,1],~ b_i \in \{0,1\}
\end{align}
Constraint \ref{ic2to1} is obtained by combining $(v_2,k_2) \rightarrow (v_1,k_1)$ and $(v_2,k_2) \rightarrow (v_3,k_3)$. Observe that we can increase $t_3, t_1, t_2$ in constraints \ref{ir},\ref{ic1to3},\ref{ic2to1} (sequentially in that order) making them binding. This process does not violate the other constraints. Hence, we re-write (ignoring the normalizing constant $a$) the objective function to get,
\begin{align*} 
&6 \min \{ 2a_2, b_2\}+ 10 \Big(\min \{ a_3, b_3\}  + \min \{ 4a_1, b_1\}- \min \{ 4a_3,b_3\}\Big)+  \min\{ a_3, b_3\} \\
&- \max \Big\{6 \min \{ 2a_1, b_1\} - \min\{ a_3, b_3\} - \min \{ 4a_1, b_1\} +\min \{ 4a_3, b_3\}, 6 \min \{ 2a_3, b_3\}- \min \{ a_3, b_3\}\Big\}
\end{align*}
The first term is independent of the others, hence we set $(a_2,b_2) = (0.5,1)$ maximizing the corresponding revenue. If $b_1 = 0$, the contribution to the revenue from the corresponding terms is $0$. If $b_1 = 1$ and $a_1 < 0.25$ increasing $a_1$ by $\epsilon$ increases the revenue by at least $40\epsilon - 12\epsilon + 4\epsilon$. If $b_1 =1$ and $a_1 > 0.25$, decreasing $a_1$ weakly increases the revenue. Hence, we get $a_1 = 0.25$ when $b_1 = 1$. Moreover, this gives strictly positive revenue from the corresponding terms. Therefore, we have $(a_1,b_1) = (0.25,1)$ in the optimal solution to the reduced problem.

Now, we substitute the values of $(a_1,b_1),(a_2,b_2)$ and consider two cases depending on which of the two terms inside the $\max$ expression is greater,

\noindent{\bf Case 1:} $2 - \min\{ a_3, b_3\} +\min \{ 4a_3, b_3\} \geq 6 \min \{ 2a_3, b_3\}- \min \{ a_3, b_3\}$. Collecting the $(a_3,b_3)$ terms we optimize $12 \min \{a_3,b_3\} - 11 \min \{4a_3,b_3\}$ subject to $6\min \{2a_3,b_3\} - \min\{4a_3,b_3\} \leq 2$ (condition of the case). Revenue from these terms is $0$ when $b_3 = 0$, the constraint in this case is also satisfied. If $b_3 = 1$ and $a_3 > 0.25$, the constraint is violated. If $b_3 = 1$ and $a_3 \leq 0.25$ the objective is $-32 a_3$ which is maximized at $a_3 = 0$. Hence, we conclude that this case has optimal at $(a_3,b_3) = (0,0)$. Payments can be calculated from the binding constraints as $t_1=1,t_2=4,t_3=0$ leading to a total revenue of $14a$.

\noindent{\bf Case 2:} $2 - \min\{ a_3, b_3\} +\min \{ 4a_3, b_3\} < 6 \min \{ 2a_3, b_3\}- \min \{ a_3, b_3\}$. Collecting the $(a_3,b_3)$ terms we optimize $12 \min\{a_3,b_3\} - 10 \min\{4a_3,b_3\} - 6\min\{2a_3,b_3\}$ subject to $6\min \{2a_3,b_3\} - \min\{4a_3,b_3\} > 2$ (condition of the case). Using argument as above we conclude that $(a_3,b_3) = (1,1)$ in this case. This gives a total revenue of $12a$.

It is easy to verify that the menu $\{(0.25,1,1),(0.5,1,4),(0,0,0)\}$ satisfies the ignored IC and IR constraints. Therefore, this is the optimal mechanism with constraint $b_i \in \{0,1\}$.

\subsection{Proof of the Proposition \ref{prop:red}.}\label{subsec:prop2}
\begin{proof}
Fix an IC and IR mechanism $(f,p)$ and define $(f',p')$ as follows,
\begin{displaymath}
(f'(v,k),p'(v,k)) := \left\{ \begin{array}{ll}
 \big(f_1(v,k),\frac{f_1(v,k)}{k},p(v,k)\big) & \textrm{if $\frac{f_1(v,k)}{k} \leq f_2(v,k)$}\\
 \big(kf_2(v,k),f_2(v,k),p(v,k)\big) & \textrm{if $\frac{f_1(v,k)}{k} > f_2(v,k)$}.
\end{array} \right.
\end{displaymath}

The {\it non-wasteful} mechanism $(f',p')$ generates as much revenue as $(f,p)$. Showing that it satisfies IC and IR conditions will prove the proposition. Fix any type $(v,k)$ and to show that this type does not deviate to some other type $(u,j)$, we do this in two cases.\\

\noindent{\sc Case 1 - $\frac{f_1(u,j)}{j} \leq f_2(u,j)$ }. 
\begin{align}
U_{(v,k)}(f'(v,k),p'(v,k)) &= U_{(v,k)}(f(v,k),p(v,k))\\
						   &\geq U_{(v,k)}(f(u,j),p(u,j)) \\
						   &= v \min\{\frac{f_1(u,j)}{k},f_2(u,j)\} - p(u,j)\\
						   &\geq v \min\{\frac{f_1(u,j)}{k},\frac{f_1(u,j)}{j}\} - p(u,j)\\
						   &= U_{(v,k)}(f_1(u,j),\frac{f_1(u,j)}{j},p(u,j))\\
						   &= U_{(v,k)}(f'(u,j),p'(u,j)).
\end{align}

\noindent{\sc Case 2 - $\frac{f_1(u,j)}{j} > f_2(u,j)$}. 
\begin{align*}
U_{(v,k)}(f'(v,k),p'(v,k)) &= U_{(v,k)}(f(v,k),p(v,k))\\
						   &\geq U_{(v,k)}(f(u,j),p(u,j)) \\
						   &= v \min\{\frac{f_1(u,j)}{k},f_2(u,j)\} - p(u,j)\\
						   &\geq v \min\{\frac{jf_2(u,j)}{k},f_2(u,j)\} - p(u,j)\\
						   &= U_{(v,k)}(jf_2(u,j),f_2(u,j),p(u,j))\\
						   &= U_{(v,k)}(f'(u,j),p'(u,j)).
\end{align*}

In both the cases, first inequality is by incentive compatibility of $(f,p)$, second inequality by the condition that defines the particular case, the first and last equations by construction of $(f',p')$, and the rest by definitions. Using first equations and the fact that $(f,p)$ is IR implies that $(f',p')$ is IR.
\end{proof}

\subsection{Proof of the Proposition \ref{prop:ICC}.}\label{subsec:propicc}

\begin{proof}
Let a mechanism $(f,p) \in \mathcal{M}$ be IC, then to show (1) and (2) fix some $k$. For any $v' > v$, consider the following IC constraints,
\begin{align*}
(v,k) \rightarrow (v',k) &\equiv vf_2(v,k) - p(v,k) \geq vf_2(v',k) - p(v',k)\\
(v',k) \rightarrow (v,k) &\equiv v'f_2(v',k) - p(v',k) \geq v'f_2(v,k) - p(v,k).
\end{align*}
After suppressing $k$ in the above inequalities notice that these are the standard one-dimensional IC constraints between two types $v,v'$. Therefore, in similar fashion to the one-dimensional problem we get (1) by adding the inequalities. For any $k$ applying \citet{Myerson81}'s revenue equivalence formula we get 
$$p(v,k) = p(0,k) + vf_2(v,k) - \int_{0}^{v}f_2(t,k)dt~\text{for all}~v$$
Applying Lemma \ref{lem:zero} to this expression we get (2).\\

To show (3) and (4) consider any $v,k' > k$. IC constraint $(v,k) \rightarrow (v,k')$ implies that,
\begin{align}
U_{(v,k)}(f(v,k),p(v,k)) &\geq U_{(v,k)}(f(v,k'),p(v,k')) \nonumber\\
\implies vf_2(v,k) - p(v,k) &\geq v\min\{\frac{f_1(v,k')}{k},f_2(v,k')\} - p(v,k') \nonumber\\
                                            & = v\min\{\frac{k'f_2(v,k')}{k},f_2(v,k')\} - p(v,k') \nonumber\\
                                            & = vf_2(v,k') - p(v,k') \nonumber\\
\implies \int_{0}^{v}f_2(t,k)dt &\geq \int_{0}^{v}f_2(t,k')dt \nonumber
\end{align}
IC constraint $(v,k') \rightarrow (v\frac{k}{k'},k)$ implies that,
\begin{align}
U_{(v,k')}(f(v,k'),p(v,k')) &\geq U_{(v,k')}(f(v\frac{k}{k'},k),p(v\frac{k}{k'},k)) \nonumber\\
\implies vf_2(v,k') - p(v,k') &\geq v\min\{\frac{f_1(v\frac{k}{k'},k)}{k'},f_2(v\frac{k}{k'},k)\} - p(v\frac{k}{k'},k) \nonumber\\
                                            & = v\min\{\frac{k}{k'}f_2(v\frac{k}{k'},k),f_2(v\frac{k}{k'},k)\} - p(v\frac{k}{k'},k) \nonumber\\
                                            & = v\frac{k}{k'}f_2(v\frac{k}{k'},k) - p(v\frac{k}{k'},k) \nonumber\\
                                            & = U_{(v\frac{k}{k'},k)}(f(v\frac{k}{k'},k),p(v\frac{k}{k'},k)) \nonumber\\
\implies \int_{0}^{v}f_2(t,k')dt &\geq \int_{0}^{v\frac{k}{k'}}f_2(t,k)dt \nonumber
\end{align}
The first equality in both the constraints uses the fact that $(f,p) \in \mathcal{M}$. The second implication uses the necessary condition (2) of this Proposition.\\

\begin{figure}[!hbt]
\centering
\includegraphics[width=3in]{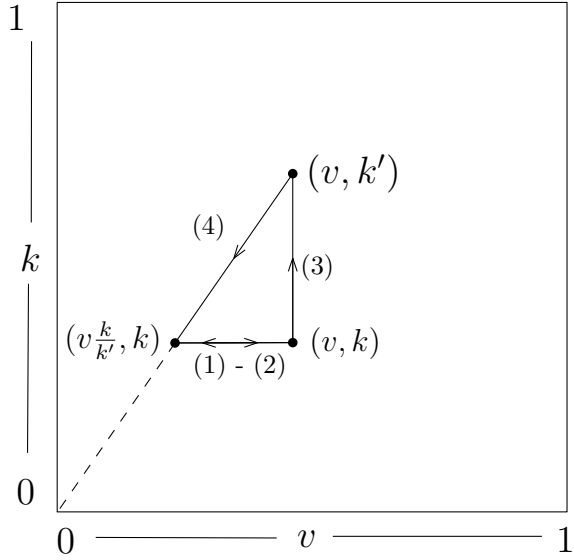}
\caption{IC Constraints}
\end{figure}

For the only if part, fix any $k$ and notice that IC constraints of the type $(v,k) \rightarrow (v',k')$ when $k = k'$ are satisfied by conditions (1) and (2) as this is equivalent to standard one-dimensional one agent model. Therefore, it is enough to show that any type $(v,k)$ does not deviate to a $(v',k')$ in the following two cases.\\
\noindent {\bf Case 1: $k' > k$.}
\begin{align*}
U_{(v,k)}(f(v,k),p(v,k)) &= vf_2(v,k)-p(v,k)\\
                                        &= \int_{0}^{v}f_2(t,k)dt - p(0,1)\\
                                        &\geq \int_{0}^{v}f_2(t,k')dt - p(0,1)\\
                                        &= vf_2(v,k')-p(v,k')\\
                                        & \geq vf_2(v',k')-p(v',k')\\
                                        & = v\min\{\frac{k'}{k}f_2(v',k'),f_2(v',k')\}-p(v',k')\\
                                        &=U_{(v,k)}(f(v',k'),p(v',k'))
\end{align*}

The second and third equation uses condition (2).  The second inequality is from IC constraint $(v,k') \rightarrow (v',k')$ which in turn come from conditions (1) and (2) as argued already. The first inequality is from condition (3).

\noindent {\bf Case 2: $k' < k$.}
\begin{align*}
U_{(v,k)}(f(v,k),p(v,k)) &= vf_2(v,k)-p(v,k)\\
                                        &= \int_{0}^{v}f_2(t,k)dt - p(0,1)\\
                                        &\geq \int_{0}^{v\frac{k'}{k}}f_2(t,k')dt - p(0,1)\\
                                        &= v\frac{k'}{k}f_2(v\frac{k'}{k},k')-p(v\frac{k'}{k},k')\\
                                        & \geq v\frac{k'}{k}f_2(v',k')-p(v',k')\\
                                        & = v\min\{\frac{k'}{k}f_2(v',k'),f_2(v',k')\}-p(v',k')\\
                                        &=U_{(v,k)}(f(v',k'),p(v',k'))
\end{align*}

The second and third equation uses condition (2), the second inequality is from IC constraint $(v\frac{k'}{k},k') \rightarrow (v',k')$ which in turn come from conditions (1) and (2) as argued already. The first inequality is from condition (4).
\end{proof}

\subsection{Proof of Lemma \ref{lem:impmnt}}\label{proof:lem3}
It is straightforward to see that constraint (C1) is satisfied. For (C2), observe that, for any $(v,k)$,
\begin{equation}\label{eq:new}
\int_{0}^{v}f_2'(t,k)dt = \left\{ \begin{array}{ll}
0 & \textrm{if $v \le 1-\int_{0}^{1}f_2(t,k)dt$}\\
v- 1+\int_{0}^{1}f(t,k)dt& \textrm{otherwise.}
\end{array} \right.
\end{equation} 
Fix any $k' > k$, and since $(f,p)$ satisfies constraint (C2) we have, 
\begin{equation}\label{eq:sc2}
\int_{0}^{1}f_2(t,k')dt \leq \int_{0}^{1}f_2(t,k)dt.
\end{equation}

If $v \leq 1-\int_{0}^{1}f_2(t,k')dt$, then $\int_{0}^{v}f_2'(t,k')dt = 0 \leq \int_{0}^{v}f_2'(t,k)dt$, as $f_2'(v,k) \geq 0~\forall (v,k)$. \\

Else if $v > 1-\int_{0}^{1}f_2(t,k')dt$, then $v > 1-\int_{0}^{1}f_2(t,k)dt$ by equation \ref{eq:sc2}. Therefore, $\int_{0}^{v}f_2'(t,k')dt = v -1 + \int_{0}^{1}f_2(t,k') \leq v -1 + \int_{0}^{1}f_2(t,k) = \int_{0}^{v}f_2'(t,k)dt$. The inequality is by equation \ref{eq:sc2}. The equations are by expression \ref{eq:new}.

Now we show that $(f',p')$ generates weakly more expected revenue than $(f,p)$. Fix any $k$. Denote $\beta_{(f,p,k)} := 1-\int_{0}^{1}f_2(t,k)dt$ and consider the difference in expected revenue of the two mechanisms,
\begin{align*}
\int_{0}^{1}\phi(v,k)g(v|k)\big(f_2'(v,k) - f_2(v,k)\big)dv &= \int_{\beta_{(f,p,k)}}^{1}\phi(v,k)g(v|k)\big(f_2'(v,k)-f_2(v,k)\big)dv\\
                                                                                                   &\qquad - \int_{0}^{\beta_{(f,p,k)}}\phi(v,k)g(v|k)f_2(v,k)dv\\
                                                                                                   &\geq \phi(\beta_{(f,p,k)},k)g(\beta_{(f,p,k)}|k)\int_{\beta_{(f,p,k)}}^{1}\big(f_2'(v,k)-f_2(v)\big)dv\\
                                                                                                   &\qquad - \phi(\beta_{(f,p,k)},k)g(\beta_{(f,p,k)}|k)\int_{0}^{\beta_{(f,p,k)}}f_2(v,k)dv\\
                                                                                                   &= \phi(\beta_{(f,p,k)},k)g(\beta_{(f,p,k)}|k) \big(\int_{0}^{1}(f_2'(v,k)-f_2(v,k))dv\big)\\
                                                                                                   &=0
\end{align*}
The equations use the definition of $(f',p')$ and rearranging of terms, the inequality is from the fact that $\phi(v,k)g(v|k)$ is increasing. Since we have shown this for an arbitrary $k$ therefore expected revenue from $(f',p')$ is greater(weakly) than $(f,p)$.

\subsection{Ironing for Theorem \ref{theo:last}.}\label{subsec:ironing}

Fix any mechanism $(f,p)$ described in \ref{eq:theo} and note that {\sc Condition B} implies $\phi^{-1}_{k}(0) > \phi_{k'}^{-1}(0)~\forall k' > k$. Consider the following three mutually exclusive and exhaustive cases:
\begin{enumerate}
\item $\rho(1) \leq \phi_{1}^{-1}(0)$. Consider the following mechanism $(f',p')$ defined by,
\begin{displaymath}
f_2'(v,k) = \left\{ \begin{array}{ll}
0 & \textrm{if $v \le \rho(1)$}\\
1 & \textrm{otherwise.}
\end{array} \right.
\end{displaymath}
Fix any $k$, note that $\rho(k) \leq \rho(1)$. If $v \leq \rho(k)$ or $v > \rho(1)$ then $f_2'(v,k) = f_2(v,k)$. $\rho(1) \leq \phi_{1}^{-1}(0) \leq \phi_{k}^{-1}(0)$ implies $\phi(\rho(1),k)g(\rho(1)|k) \leq \phi(\phi_{k}^{-1}(0),k)g(\phi_{k}^{-1}(0)|k) = 0$ since $\phi(v,k)g(v|k)$ is increasing in $v$.  This implies $\phi(v,k)g(v|k) \leq 0$ for all $v \leq \rho(1)$. Hence, $\int_{\rho(k)}^{\rho(1)}f_2(v,k)\phi(v,k)g(v|k)dv \leq 0$. Noticing $\int_{\rho(k)}^{\rho(1)}f_2'(v,k)\phi(v,k)g(v|k)dv = 0$ by construction we conclude revenue in $(f',p')$ is more than that of $(f,p)$.\\

With some abuse of notation , we use $\rho(0^{+})$ to denote $\lim_{k \to 0^{+}}\rho(k)$, and $\phi_{0^{+}}^{-1}(0)$ to denote $\lim_{k \to 0^{+}}\phi_{k}^{-1}(0)$.

\item $\rho(1) > \phi_{1}^{-1}(0)$ and $\rho(0^{+}) < \phi_{0^{+}}^{-1}(0)$. $\rho$ is increasing in $k$, and $\phi^{-1}_{k}(0)$ is strictly decreasing in $k$ and continuous, hence the function $\rho(k) - \phi^{-1}_{k}(0)$ is strictly increasing (and continuous a.e.). Therefore, there exists a unique $k^*$ such that $\rho(k) > \phi_{k}^{-1}(0)~\forall k > k^*$ and $\rho(k) < \phi_{k}^{-1}(0)~\forall k < k^*$. Let $v^* := \rho(k^*)$. Define a {\sc posted price} mechanism $(f',p')$ as follows,
\begin{displaymath}
f_2'(v,k) = \left\{ \begin{array}{ll}
0 & \textrm{if $v \le v^*$}\\
1 & \textrm{otherwise.}
\end{array} \right.
\end{displaymath}
We show that $(f',p')$ generates more expected revenue than $(f,p)$ for every $k$ in two following cases.
\begin{enumerate}
\item Fix any $k > k^*$. Note that $v^* \leq \rho(k)$. If $v \leq v^*$ or $v > \rho(k)$ then $f_2'(v,k) = f_2(v,k)$. Since $\phi_k^{-1}(0) \leq \phi_{k^*}^{-1}(0) = v^*$ and $\phi(v,k)g(v|k)$ increasing in $v$ we have $\phi(v,k)g(v|k) > 0$ for all $v > v^*$. Therefore, $\int_{v^*}^{\rho(k)}\big(f_2'(v,k)-f_2(v,k)\big)\phi(v,k)g(v|k)dv \geq 0$ since $f_2'(v,k) = 1$ in this range.
\item Fix any $k < k^*$. Note that $v^* \geq \rho(k)$. If $v > v^*$ or $v \leq \rho(k)$ then $f_2'(v,k) = f_2(v,k)$. Since $\phi_k^{-1}(0) \geq \phi_{k^*}^{-1}(0) = v^*$ and $\phi(v,k)g(v|k)$ increasing in $v$ we have $\phi(v,k)g(v|k) < 0$ for all $v < v^*$. Therefore, $\int_{\rho(k)}^{v^*}\big(f_2'(v,k)-f_2(v,k)\big)\phi(v,k)g(v|k)dv \geq 0$ since $f_2'(v,k) = 0$ in this range.
\end{enumerate}
\item $\rho(0^{+}) \geq \phi_{0^{+}}^{-1}(0)$. Consider the following mechanism $(f',p')$ defined by,
\begin{displaymath}
f_2'(v,k) = \left\{ \begin{array}{ll}
0 & \textrm{if $v \le \rho(0^{+})$}\\
1 & \textrm{otherwise.}
\end{array} \right.
\end{displaymath}
Fix any $k$ and note that $\rho(k) \geq \rho(0^{+})$. If $v \leq \rho(0^{+})$ then $f_2'(v,k) = f_2(v,k)$. Since $\phi_k^{-1}(0) \leq \phi_{0^{+}}^{-1}(0) \leq \rho(0^{+})$ and $\phi(v,k)g(v|k)$ increasing in $v$ we have $\phi(v,k)g(v|k) > 0$ for all $v > \rho(0^{+})$. Therefore, $\int_{\rho(0^{+})}^{1}\big(f_2'(v,k)-f_2(v,k)\big)\phi(v,k)g(v|k)dv \geq 0$ since $f_2'(v,k) = 1$ in this range.
\end{enumerate}

In each of the above three cases, we have shown that the revenue is higher in a {\sc posted price} mechanism for an arbitrary $k$. Therefore, a {\sc posted price} mechanism is optimal in the reduced problem we considered. This also implies it is the optimal mechanism since we have shown that a {\sc posted price} mechanism satisfies all the constraints, including the ignored constraint (C3).

\end{document}